\title{Semantic Analysis of Subexponential Modalities in Distributive Non-commutative Linear Logic}
\author{Daniel Rogozin
\institute{University College London \\ London, UK}
\email{d.rogozin@ucl.ac.uk}
}
\newcommand{\titlerunning}{Subexponential Modalities}
\newcommand{\authorrunning}{D. Rogozin}
\newtheorem{theorem}{Theorem}
\newtheorem{lemma}[theorem]{Lemma}
\newtheorem{proposition}[theorem]{Proposition}%
\theoremstyle{definition}
\newtheorem{definition}[theorem]{Definition}%
\newtheorem{remark}[theorem]{Remark}
\providecommand{\keywords}[1]
{
%  \small
  \textbf{\textit{Keywords---}} #1
}
\begin{document}
\maketitle
%\keywords{Lambek calculus, subexponentials, canonical extensions, residuated lattices}

\begin{abstract}
In this paper, we consider the full Lambek calculus enriched with subexponential modalities in a distributive setting. We show that the distributive Lambek calculus with subexponentials is complete with respect to its Kripke frames via canonical extensions.
In this approach, we consider subexponentials as ${\bf S4}$-like modalities and each modality is interpreted with a reflexive and transitive relation similarly to usual Kripke semantics.

\keywords{Lambek calculus, subexponentials, canonical extensions, residuated lattices}
\end{abstract}

\section{Introduction}

Substructual logics are a kind of non-classical logic lacking some of the structual rules: weakening, contraction and exchange, we refer the reader to \cite{restall2002introduction} to have a more systematic introduction. The Lambek calculus is a logic with none of these rules, which was introduced initially by Lambek for modelling natural language grammar by proof-theoretic means \cite{lambek1958mathematics}. From an algebraic point of view, the Lambek calculus is the logic of residuated semigroups, which are also connected to other substructural logics such as relevant or linear logics, see e.g. \cite{ono2003substructural,urquhart1996duality}.

The $!$-modality originally comes from linear logic, where $!$ introduces lacking structural rules in a restricted way, see \cite{girard1987linear, lafont1987linear}. Such modal enrichments were also discussed in the context of resource management in computation based on linear types, see \cite{lafont1988linear, wadler1993taste}.
Enrichments of the Lambek calculus with the $!$-modality can be also motivated in terms of mathematical lingustics, see \cite{kanovich2021multiplicative}.

The polymodal expansion of the full Lambek calculus (that is, the Lambek calculus with additive connectives and constants) with subexponentials has been introduced by Kanovich, Kuznetsov, Scedrov and Nigam \cite{kanovich2019subexponentials} to generalise linear logical frameworks used for linear authorisation logics and concurrent programming languages \cite{nigam2009algorithmic}.
At the moment, subexponential modalities in non-commutative linear logics have not been analysed comperehensively in terms of semantics and the current known results are rather about proof-theoretic and computability aspects, see  \cite{kanovich2019subexponentials}.

In this paper, we show the completeness result for subexponential expansions of distributive non-commutative linear logic.
As far as subexponential $!$-modalites obey some formal properties of the $\Box$-modality from classical modal logic,
we consider them as Kripke-style necessity operators as in usual Kripke semantics of modal logic:
\begin{center}
$\mathcal{M}, x \models \: ! \varphi$ iff $\forall y \: (x R y \Rightarrow \mathcal{M}, y \models \varphi)$
\end{center}
Generally, the full Lambek calculus is non-distributive, so the law of the form
\begin{center}
$\varphi \land (\psi \lor \theta) \vdash (\varphi \land \psi) \lor (\varphi \land \theta)$
\end{center}
is not valid unless we require this principle as an extra-axiom. So we consider the distributive full Lambek calculus enriched with subexponentials in terms of Kripke semantics with the standard truth definition for disjunction and conjunction.
Notice that the truth definition for conjuction along with the Kripke-style definition for $!$ semantically imply that the following principals should be valid as well, so we take them as the additional axioms:
\begin{itemize}
\item $\top \: \vdash \: ! \top$,
\item $! \varphi \: \land \: ! \psi \vdash \: !(\varphi \land \psi)$.
\end{itemize}

We prove that the distributive full Lambek calculus with subexponential modalities is Kripke complete by showing that the corresponding variety of distributive residuated lattices with modal operators is closed under canonical extensions and conclude Kripke-completeness from canonicity.
In turn, canonical extensions are a kind of completions in algebraic logic initially introduced for the purposes of extending Stone representation theorem for Boolean algebras with operators \cite{jonsson1951boolean}, but canonical extensions were further generalised for
bounded (non)distributive lattices with operators \cite{gehrke2004bounded, gehrke2011view, conradie2019algorithmic}.

\section{The full Lambek calculus and subexponentials}
In this section, we recall the reader some preliminary notions related to the full Lambek calculus, see \cite[Chapter 2]{galatos2007residuated} for details.

Assume we have a fixed countable set of propositional variables $\operatorname{PV} = \{ p_i \: | \: i < \omega \}$. The set of formulas is generated by the following grammar:
\begin{center}
$\varphi ::= \bot \: | \: \top \: | \: {\bf 1} \: | \: p \: | \: (\varphi \bullet \varphi) \: | \: (\varphi \setminus \varphi) \: | \: (\varphi / \varphi) \: | \: (\varphi \lor \varphi) \: | \: (\varphi \land \varphi)$
\end{center}
A family of substructural modal logics we are going to consider further enrich the full Lambek calculus, which is defined the following way.

\begin{definition} \emph{The full Lambek calculus} is defined as the smallest set of pairs of formulas $\varphi \vdash \psi$ that contains the following axioms and is closed under the inference rules:

\begin{itemize}
\item $p \vdash \top$
\item $\bot \vdash p$
\item $p \bullet (q \bullet r) \dashv \vdash (p \bullet q) \bullet r$
\item $p \dashv \vdash {\bf 1} \bullet p \dashv \vdash p \bullet {\bf 1}$
\item $p_i \vdash p_1 \lor p_2$, for $i = 1, 2$
\item $p_1 \land p_2 \vdash p_i$, for $i = 1, 2$
\end{itemize}

\begin{center}
\end{center}
\begin{minipage}{0.5\textwidth}
\begin{flushleft}

\begin{prooftree}
\AxiomC{$\varphi \vdash \psi$}
\AxiomC{$\varphi \vdash \theta$}
\BinaryInfC{$\varphi \vdash \psi \land \theta$}
\end{prooftree}

\begin{prooftree}
\AxiomC{$\varphi \vdash \psi$}
\AxiomC{$\theta \vdash \tau$}
\BinaryInfC{$\varphi \bullet \theta \vdash \psi \bullet \tau$}
\end{prooftree}

\begin{prooftree}
\AxiomC{$\varphi \bullet \psi \vdash \theta$}
\doubleLine
\UnaryInfC{$\psi \vdash \varphi \setminus \theta$}
\end{prooftree}

\end{flushleft}
\end{minipage}\hfill
\begin{minipage}{0.5\textwidth}
\begin{flushright}

\begin{prooftree}
\AxiomC{$\varphi \vdash \theta$}
\AxiomC{$\psi \vdash \theta$}
\BinaryInfC{$\varphi \lor \psi \vdash \theta$}
\end{prooftree}

\begin{prooftree}
\AxiomC{$\varphi(p) \vdash \psi(p)$}
\UnaryInfC{$\varphi[p := \theta] \vdash \psi[p := \theta]$}
\end{prooftree}

\begin{prooftree}
\AxiomC{$\varphi \bullet \psi \vdash \theta$}
\doubleLine
\UnaryInfC{$\psi \vdash \theta / \varphi$}
\end{prooftree}

\end{flushright}
\end{minipage}
\begin{center}
\begin{prooftree}
\AxiomC{$\varphi \vdash \psi$}
\AxiomC{$\psi \vdash \varphi$}
\BinaryInfC{$\varphi \vdash \theta$}
\end{prooftree}
\end{center}

\end{definition}

The \emph{distributive} full Lambek calculus extends the full Lambek calculus with the following extra axiom
\begin{center}
$p \land (q \lor r) \vdash (p \land q) \lor (p \land r)$.
\end{center}

Algebraically, the full Lambek calculus is the logic of residuated lattices. \emph{A residuated lattice} is an algebra $\mathcal{L} = (L, \wedge, \vee, \cdot, \cdot, \setminus, /, \top, \bot, \epsilon)$ such that $(L, \wedge, \vee, \top, \bot)$ is a bounded lattice, $(L, \cdot, \epsilon)$ is a monoid and $\setminus$ and $/$ are binary operations (\emph{residuals}) such that for all $a, b, c \in L$:
\begin{center}
$b \leq a \setminus b \leftrightarrow a \cdot b \leq c \leftrightarrow a \leq c / b$
\end{center}
A residuated lattice is \emph{distributive} if its lattice reduct is distributive.

Note that the classes of distributive and non-distributive residuated lattices are known to be varieties \cite[Lemma 2.3]{jipsen2002survey}, so both of these classes are equationally axiomatisable. Thus the (distributive) full Lambek calculus is complete with respect to the variety of (distributive) residuated lattices. Standarly, each pair of formulas $\varphi \vdash \psi$ corresponds to an algebraic inequality $\varphi \leq \psi$, which is, in turn, an algebraic identity of the form $\varphi \wedge \psi = \psi$.

To enrich the full Lambek calculus with a family of modalities $(!_i)_{i \in \Sigma}$ for some $\Sigma \neq \emptyset$, we extend the grammar of formulas as follows:
\begin{center}
$!_i \varphi$ is a formula for each $i \in \Sigma$ whenever $\varphi$ is a formula.
\end{center}

Let $(I, \preceq)$ be a preorder. A \emph{subexponential signature} is a tuple $\Sigma = (I, \preceq, W, E, C)$, where $W, E, C$ are upward closed subsets of $I$ such that $W \cap C \subseteq E$. Each of these distinguished subsets corresponds to those modalities that emulate a particular structural rule. That is, subexponentials introducing the weakining rule are indexed by elements from $W$ and the same for the exchange and contraction rules. The requirement $W \cap C \subseteq E$ means that if a subexponential allows weakining and exchanging, then it also allows contracting, see \cite{kanovich2019subexponentials}.

Originally, the Lambek calculus with subexponentials was introduced in \cite{kanovich2019subexponentials} as follows.
\begin{definition}
\emph{The full Lambek calculus with subexponential modalities} over a signature $\Sigma$, denoted as $\operatorname{SMALC_{\Sigma}}$, is the defined by the following axioms and inference rules, for all $i,j, k \in \Sigma$:
\begin{enumerate}
\item The axioms and inference rules of the full Lambek calculus,
\item $!_i p \: \vdash \: p$ and $!_i p \: \vdash \: !_i !_i p$,
\item $!_i p \: \bullet \: !_j q \vdash \: !_k (p \bullet q)$ such that $k \succeq i, j$,
\item ${\bf 1} \vdash \: !_{i} {\bf 1}$,
\item $!_i p \: \bullet q \vdash \: !_i p \bullet q \: \bullet \: !_i p$ and $q \: \bullet \: !_i p \vdash \: !_i p \bullet q \: \bullet \: !_i p$, for $i \in C$,
\item $!_i p \vdash {\bf 1}$, for $i \in W$,
\item $!_i p \bullet \: q \dashv \vdash \: q \: \bullet \: !_i p$ for $i \in E$,
\item From $\varphi \vdash \psi$ infer $!_{i} \varphi \: \vdash \: !_{j} \psi$ whenever $j \preceq i$.
\end{enumerate}
\end{definition}

\begin{remark}
  We also can drop the ${\bf S4}$-axioms and all results of this paper will be preserved in that case.
  However, these axioms are important as the promotion and dereliction principles from linear logic and modal enrichments of the Lambek calculus.
\end{remark}

\begin{remark}
In \cite{kanovich2019subexponentials}, $\operatorname{SMALC_{\Sigma}}$ was introduced as a Gentzen-style sequent calculus, but our version is more Hilbert-like. One can show that both formalisms are equivalent standardly.
\end{remark}

\begin{remark}
Instead of the traditional contraction axiom of the form $!_i \: p \vdash \: !_i p \: \bullet \: !_i p$, we use its more general version introduced in \cite{kanovich2019subexponentials}, which is called \emph{non-local contraction}.
\end{remark}

As we have already discussed in the first section, we extend $\operatorname{SMALC_{\Sigma}}$ with additional axioms to define the distributive version of the full Lambek calculus with subexponentials.
\begin{definition}
\emph{The distributive full Lambek calculus with subexponential modalities} over a subexponential signature $\Sigma$, denoted as $\operatorname{DSMALC_{\Sigma}}$, extends $\operatorname{SMALC_{\Sigma}}$ with the following extra axioms, for each $i \in \Sigma$:
\begin{enumerate}
\item $p \land (q \lor r) \vdash (p \land q) \lor (p \land r)$,
\item $!_i p \: \land \: !_i q \vdash \: !_i (p \land q)$,
\item $\top \vdash \: !_i \top$.
\end{enumerate}
\end{definition}

For algebraic semantics, we define $\Sigma$-algebras for a subexponential signature $\Sigma$ by rewriting the axioms of $\operatorname{DSMALC_{\Sigma}}$ as algebraic inequalities.

\begin{definition}
Let $\Sigma$ be a subexponential signature, a $\Sigma$-algebra is an algebra $\mathcal{R} = (R, \wedge, \vee, \cdot, \setminus, /, (!_i)_{i \in \Sigma},\\ \top, \bot, \epsilon)$ such that, for each $i, j, k \in \Sigma$:
\begin{enumerate}
\item $(R, \wedge, \vee, \cdot, \setminus, /, \top, \bot, \epsilon)$ is a distributive residuated lattice,
\item $!_i$ preserves finite infima, $\epsilon \leq \: !_i \epsilon$ and $!_i !_i a = \: !_i a \leq a$ for all $a \in \mathcal{R}$,
\item If $k \preceq i, j$, then $!_i a \: \cdot \: !_j b \leq \: !_k (a \cdot b)$ for all $a, b \in \mathcal{R}$,
\item $!_i a \: \cdot b \leq \: !_i a \cdot b \: \cdot \: !_i a$ and $b \: \cdot \: !_i a \leq \: !_i a \cdot b \: \cdot \: !_i a$, for $a, b \in \mathcal{R}$ and $i \in C$,
\item $!_i a \leq \epsilon$, for $a \in \mathcal{R}$ and $i \in W$,
\item $!_i a \cdot \: b = \: b \: \cdot \: !_i a$ for $a, b \in \mathcal{R}$ and $i \in E$.
\end{enumerate}
\end{definition}

Let $\mathcal{R}$ be a $\Sigma$-algebra, an algebraic interpretation is a function $[\![.]\!] : \operatorname{PV} \to \mathcal{R}$ that commutes with connectives usually. Clearly that the class of all $\Sigma$-algebras is a variety and $\operatorname{DSMALC_{\Sigma}}$ is complete with respect to that class. We denote the free algebra with $\omega$ generators from the variety of $\Sigma$-algebras as $F_{\Sigma}$. The fact that $\varphi \vdash \psi$ is provable in $\operatorname{DSMALC_{\Sigma}}$ iff $F_{\Sigma} \models [\![\varphi]\!] \leq [\![\psi]\!]$ can be thought as folklore from universal algebra.

\section{Kripke semantics}

In this subsection, we introduce relational semantics for $\operatorname{DSMALC_{\Sigma}}$ in the fashion of Routley-Meyer models for relevant logic (see \cite{routley1972semantics, seki2003sahlqvist}) and well as for other substructural logics (see \cite{allwein1993kripke}).

First of all, we define ternary frames for the distributive full Lambek calculus.

\begin{definition}
A ternary Kripke frame is a structure $F = (W, \leq, R, \mathcal{O})$ where $R \subseteq W^3$, $\mathcal{O} \subseteq W$ and $(W, \leq)$ is a poset such that for all $u, v, w, u', v', w'$:
\begin{itemize}
\item $\exists x \in W \: R u w x \: \& \: R x u' v' \Leftrightarrow \exists y \in W \: R w u' y \: \& \: R u y v'$,
\item $Ruvw \: \& \: u' \leq u \Rightarrow Ru'vw$,
\item $Ruvw \: \& \: v' \leq v \Rightarrow Ruv'w$
\item $Ruvw \: \& \:w \leq w' \Rightarrow Ruvw'$,
\item $O$ is an upward closed subset such that $\forall o \in O \: R v o w \Leftrightarrow R o v w$ and $v \leq w \Rightarrow \exists o \in O \: R v o w$,
\end{itemize}
\end{definition}

One can associate a distributive residuated lattice with every ternary Kripke frame. Given a ternary frame $\mathcal{F} = (W, \leq, R, \mathcal{O})$, its dual \emph{complex algebra} is the algebra $\operatorname{Cm}(\mathcal{F}) = (\operatorname{Up}(W), \cap, \cup, \cdot, \setminus, /, \mathcal{O}, \emptyset, W)$ where:
\begin{enumerate}
\item $\operatorname{Up}(\mathcal{F})$ is the set of all upward closed subsets of $\operatorname{Up}(W)$,
\item $A \cdot B = \{ w \in W \: | \: \exists u, v \in W \: R u v w \: \& \: u \in A \: \& \: v \in B\}$,
\item $A \setminus B = \{ w \in W \: | \: \forall u, v \in W \: R u w v \: \& \: u \in A \Rightarrow v \in B \}$,
\item $A / B = \{ w \in W \: | \: \forall u, v \in W \: R w u v \: v \in B \Rightarrow u \in A \}$.
\end{enumerate}
It is readily checked that $\operatorname{Cm}(\mathcal{F})$ is well-defined since if $A, B$ are upward closed, so are $A \cdot B$, $A \setminus B$ and $A / B$. Also, using the definition of a ternary frame, one can show that $\operatorname{Cm}(\mathcal{F})$ is indeed a distributive residuated lattice.
\begin{definition}
Let $\mathcal{F}$ be a ternary frame and let $\Sigma$ be a subexponential signature. A \emph{$\Sigma$-frame} is an expansion of $\mathcal{F}$ with a family of binary relations $(R_i)_{i \in \Sigma}$ such that for all $i, j, k \in \Sigma$:
\begin{enumerate}
\item For all $i \in \Sigma$, $(\mathcal{F}, R_i)$ is a preorder,
\item For all $u, v, w, w' \in \mathcal{F}$, if $k \preceq i, j, Ruvw \: \& \: w R_{k} w'$, then $\exists x, y \in W \: R x y w' \: \& \: u R_{i} x \: \& \: v R_{j} y$,
\item $i \preceq j$ implies $R_j \subseteq R_i$,
\item $\mathcal{O} \subseteq [R_i] \mathcal{O}$.
\item For $A, B \in \operatorname{Up}(\mathcal{F})$ one has $[R_{i}] A \cdot B = B \cdot [R_{i}] A$ whenever $i \in E$,
\item For $A, B \in \operatorname{Up}(\mathcal{F})$ one has $[R_{i}] A \cdot B \subseteq [R_{i}] A \cdot B \cdot [R_{i}] A$ and $B \cdot [R_{i}] A \subseteq [R_{i}] A \cdot B \cdot [R_{i}] A$ whenever $i \in C$,
\item For $A \subseteq \operatorname{Up}(\mathcal{F})$ one has $[R_{i}] A \subseteq \mathcal{O}$ whenever $i \in W$.
\end{enumerate}
where $[R_{i}]A = \{ u \in W \: | \: \forall w \in W \: (u R_{i} w \Rightarrow u \in A) \}$ for $i \in \Sigma$.
\end{definition}
Given a $\Sigma$-frame $\mathcal{F} = (W, \leq, (R_{i})_{i \in \Sigma}, R, \mathcal{O})$, the \emph{complex algebra} of $\mathcal{F}$ is the algebra \\ $\operatorname{Cm}(\mathcal{F}) = (\operatorname{Up}(W), \cap, \cup, \cdot, \setminus, /, ([R_i])_{i \in \Sigma}, \mathcal{O}, \emptyset, W)$. Frames and complex algebras are connected with each other as usual in duality theory:

\begin{proposition}~\label{complex}
Let $\mathcal{F}$ be a $\Sigma$-frame, then $\operatorname{Cm}(\mathcal{F})$ is a $\Sigma$-algebra.
\end{proposition}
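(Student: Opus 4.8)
The plan is to verify, clause by clause, that $\operatorname{Cm}(\mathcal{F})$ satisfies the defining conditions of a $\Sigma$-algebra. By the discussion preceding the statement, the reduct $(\operatorname{Up}(W), \cap, \cup, \cdot, \setminus, /, \mathcal{O}, \emptyset, W)$ is already a distributive residuated lattice, so clause~(1) is in hand and the remaining work concerns the operators $[R_i]$ and their interaction with $\cdot$ and $\mathcal{O}$ (here $\epsilon = \mathcal{O}$ and the lattice top is $W$). First I would record the routine fact that each $[R_i]$ is a well-defined unary operation on $\operatorname{Up}(W)$, that is, that it maps upward closed sets to upward closed sets; this rests on the compatibility of $R_i$ with $\leq$ that comes with the requirement that $(\mathcal{F}, R_i)$ be a preorder, just as for the residual operations $\setminus$ and $/$. (Frame condition~(3) is not used in this proof.)

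For clause~(2) I would argue exactly as in the Kripke semantics of normal modal logic. That $[R_i]$ preserves finite infima is immediate from the definition, since $[R_i](A \cap B) = [R_i]A \cap [R_i]B$ and $[R_i]W = W$. The inequality $\epsilon \leq\, !_i \epsilon$ is literally frame condition~(4), namely $\mathcal{O} \subseteq [R_i]\mathcal{O}$. For the ${\bf S4}$ identities, reflexivity of $R_i$ gives $[R_i]A \subseteq A$ (because $u R_i u$), hence also $[R_i][R_i]A \subseteq [R_i]A$ by applying this inclusion with $[R_i]A$ in place of $A$; transitivity of $R_i$ gives the reverse inclusion $[R_i]A \subseteq [R_i][R_i]A$ in the standard way, so $[R_i][R_i]A = [R_i]A \subseteq A$, which is the required $!_i !_i a = \,!_i a \leq a$.

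The remaining clauses are, by design, direct transcriptions of the frame conditions. Clauses~(4), (5) and~(6) of a $\Sigma$-algebra --- the contraction clause for $i \in C$, the clause $!_i a \leq \epsilon$ for $i \in W$, and the exchange clause for $i \in E$ --- coincide verbatim with frame conditions~(6), (7) and~(5) respectively, which were already stated in terms of $[R_i]$, $\cdot$ and $\mathcal{O}$, so nothing further is needed there. The only clause requiring an actual argument is~(3). Fix $k \preceq i, j$ and $A, B \in \operatorname{Up}(W)$, and take $w \in [R_i]A \cdot [R_j]B$, so that $R u v w$, $u \in [R_i]A$ and $v \in [R_j]B$ for some $u, v$. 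Given any $w'$ with $w R_k w'$, frame condition~(2) yields $x, y$ with $R x y w'$, $u R_i x$ and $v R_j y$; then $x \in A$ and $y \in B$ since $u \in [R_i]A$ and $v \in [R_j]B$, whence $w' \in A \cdot B$. As $w'$ was arbitrary, $w \in [R_k](A \cdot B)$, which is the inequality $!_i a \cdot\, !_j b \leq\, !_k(a \cdot b)$.

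I expect the main obstacle to be organisational rather than mathematical: pairing each algebraic axiom with the frame condition that produces it, and keeping the nested quantifiers in clause~(3) under control, together with the brief verification that $[R_i]$ preserves upward closure. Once the ${\bf S4}$ identities are read off from reflexivity and transitivity and it is observed that clauses~(4)--(6) are literal restatements of the corresponding conditions on the relations $R_i$, no genuine difficulty remains.
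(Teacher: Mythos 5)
Your proof is correct and is exactly the clause-by-clause verification that the paper leaves implicit --- its entire proof reads ``Follows from the definition of a $\Sigma$-frame,'' and your pairing of each algebraic axiom with its frame condition (with the only genuinely non-trivial step being clause~(3) via frame condition~(2)) is the intended argument. The one point you rightly flag, that upward-closure of $[R_i]A$ needs a compatibility of $R_i$ with $\leq$ beyond bare reflexivity and transitivity, is a detail the paper's frame definition itself glosses over, so your treatment is no less complete than the original.
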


\begin{proof}
Follows from the definition of a $\Sigma$-frame.
\end{proof}

\begin{definition} \emph{A Kriple model} is a tuple $\mathcal{M} = (\mathcal{F}, \vartheta)$ where $\mathcal{F}$ is a modal ternary Kripke frame and $\vartheta : \operatorname{PV} \to \operatorname{Up}(W, \leq)$ is a valuation map. The truth definition is inductive:

\begin{itemize}
\item $\mathcal{M}, w \models p$ iff $w \in \vartheta(p)$,
\item $\mathcal{M}, w \not\models \bot$,
\item $\mathcal{M}, w \models \top$,
\item $\mathcal{M}, w \models {\bf 1}$ iff $w \in O$,
\item $\mathcal{M}, w \models \: !_i \varphi$ iff $\forall u \in R_{i}(w) \: \mathcal{M}, u \models \varphi$,
\item $\mathcal{M}, w \models \varphi \lor \psi$ iff $\mathcal{M}, w \models \varphi$ or $\mathcal{M}, w \models \psi$,
\item $\mathcal{M}, w \models \varphi \land \psi$ iff $\mathcal{M}, w \models \varphi$ and $\mathcal{M}, w \models \psi$,
\item $\mathcal{M}, w \models \varphi \bullet \psi$ iff $\exists u, v \in W \: R u v w \: \& \: \mathcal{M}, u \models \varphi \: \& \: \mathcal{M}, v \models \psi$,
\item $\mathcal{M}, w \models \varphi \setminus \psi$ iff $\forall u, v \in W \: R u w v \: \& \: \mathcal{M}, u \models \varphi \Rightarrow \mathcal{M}, v \models \psi$,
\item $\mathcal{M}, w \models \psi / \varphi$ iff $\forall u, v \in W \: R w u v \: \& \: \mathcal{M}, u \models \varphi \Rightarrow \mathcal{M}, v \models \psi$,
\item $\mathcal{M}, w \models \varphi \vdash \psi$ iff $\mathcal{M}, w \models \varphi$ implies $\mathcal{M}, w \models \psi$.
\end{itemize}

As usual, $\mathcal{M} \models \varphi \vdash \psi$ iff $\mathcal{M}, w \models \varphi \vdash \psi$ for each $w \in \mathcal{M}$.
\end{definition}

Let $\mathcal{F}$ be a $\Sigma$-frame, then $\mathcal{F} \models \varphi \vdash \psi$ iff $(\mathcal{F}, \vartheta) \models \varphi \vdash \psi$ for each $\vartheta$. Let $T$ be a set of sequents, then $\mathcal{F} \models T$ iff $\mathcal{F} \models \varphi \vdash \psi$ for each $\varphi \vdash \psi \in T$.

Given a model $\mathcal{M} = (W, \leq, R, (R_i)_{i \in \Sigma}, \mathcal{O}, \vartheta)$, define its truth set $[\![\varphi]\!] = \{ w \in W \: | \: \mathcal{M}, w \models \varphi \}$. One can show that the set of truth sets form a subalgebra of the complex algebra $\operatorname{Cm}(\mathcal{F})$, where $\mathcal{F}$ is the underlying $\Sigma$-frame of a model $\mathcal{M}$. To be more precise, the following fact standardly holds:

 \begin{proposition} \label{val}
 Let $\mathcal{M} = (W, \leq, R, (R_i)_{i \in I}, \mathcal{O}, \vartheta)$ be a model, then:

 \begin{enumerate}
 \item $[\![p]\!] = \vartheta(p)$,
 \item $[\![\bot]\!] = \emptyset$,
 \item $[\![\top]\!] = W$,
 \item $[\![{\bf 1}]\!] = \mathcal{O}$,
 \item $[\![\varphi \lor \psi]\!] = [\![\varphi]\!] \cup [\![\psi]\!]$
 \item $[\![\varphi \land \psi]\!] = [\![\varphi]\!] \cap [\![\psi]\!]$,
 \item $[\![\varphi \bullet \psi]\!] = [\![\varphi]\!] \cdot [\![\psi]\!]$,
 \item $[\![\varphi \setminus \psi]\!] = [\![\varphi]\!] \setminus [\![\psi]\!]$
 \item $[\![\varphi / \psi]\!] = [\![\varphi]\!] / [\![\psi]\!]$,
 \item $[\![!_i \varphi]\!] = [R_i][\![\varphi]\!]$ for each $i \in \Sigma$,
 \item $\mathcal{M} \models \varphi \vdash \psi$ iff $[\![\varphi]\!] \subseteq [\![\psi]\!]$.
 \end{enumerate}
 \end{proposition}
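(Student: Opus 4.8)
The statement is a routine ``truth lemma'', so the plan is a straightforward induction on the structure of the formula: I would establish items (1)--(10) simultaneously and then read off (11). Before starting, I would strengthen the induction hypothesis with the auxiliary claim that $[\![\varphi]\!] \in \operatorname{Up}(W)$ for every formula $\varphi$ --- this is in fact needed for the right-hand sides of (1)--(10) to make sense as elements of $\operatorname{Cm}(\mathcal{F})$. The auxiliary claim holds for atoms because $\vartheta$ takes values in $\operatorname{Up}(W)$, for $\bot, \top, {\bf 1}$ because $\emptyset, W, \mathcal{O}$ are upward closed, and it is propagated through the connectives and through each $[R_i]$ because, as observed just after the definition of the complex algebra and in Proposition~\ref{complex}, all of $\cap, \cup, \cdot, \setminus, /$ and $[R_i]$ send upward closed sets to upward closed sets.

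The base cases (1)--(4) I would dispatch by directly unwinding the truth definition: $w \in [\![p]\!]$ iff $w \in \vartheta(p)$; $[\![\bot]\!] = \emptyset$ and $[\![\top]\!] = W$ since $\mathcal{M}, w \not\models \bot$ and $\mathcal{M}, w \models \top$ for every $w$; and $w \in [\![{\bf 1}]\!]$ iff $w \in \mathcal{O}$. For the lattice connectives (5)--(6) and for $\bullet, \setminus, /$ in (7)--(9) the inductive step is uniform: unfold the relevant clause of the truth definition, apply the induction hypothesis to rewrite each set $\{u \in W \mid \mathcal{M}, u \models \chi\}$ as $[\![\chi]\!]$ for the immediate subformulas $\chi$, and observe that the resulting set-theoretic condition is verbatim the definition of the corresponding operation of $\operatorname{Cm}(\mathcal{F})$. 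For the modal clause (10), fix $i \in \Sigma$: $w \in [\![!_i \varphi]\!]$ iff every $u$ with $w R_i u$ satisfies $\mathcal{M}, u \models \varphi$, iff (induction hypothesis) $w R_i u$ implies $u \in [\![\varphi]\!]$ for all $u \in W$, which is exactly the defining condition of $w \in [R_i][\![\varphi]\!]$. Finally, for (11) I would quantify the equivalence $\mathcal{M}, w \models \varphi \vdash \psi \Leftrightarrow (w \in [\![\varphi]\!] \Rightarrow w \in [\![\psi]\!])$ over all $w \in W$ to conclude $\mathcal{M} \models \varphi \vdash \psi \Leftrightarrow [\![\varphi]\!] \subseteq [\![\psi]\!]$.

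I do not anticipate a genuine obstacle: the whole argument is bookkeeping with the truth definition and the definitions of the complex-algebra operations. The two points I would be slightly careful about are, first, carrying the ``$[\![\varphi]\!] \in \operatorname{Up}(W)$'' clause through the entire induction, so that each stated equality really is an equality of elements of $\operatorname{Cm}(\mathcal{F})$; and second, in the $!_i$ case, matching the $\forall u\,(w R_i u \Rightarrow \cdots)$ pattern of the truth definition of $!_i$ with the definition of the operator $[R_i]$ on $\operatorname{Up}(W)$.
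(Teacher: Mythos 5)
Your proof is correct; the paper states this proposition without proof as a standard fact, and your induction on formula structure, together with the auxiliary claim that every truth set $[\![\varphi]\!]$ lies in $\operatorname{Up}(W)$, is exactly the routine argument being alluded to. The only point worth noting is that the paper's displayed definition of $[R_i]A$ contains a typo (the condition should read $w \in A$ rather than $u \in A$), which you have implicitly and correctly repaired when matching the truth clause for $!_i$ against the operator $[R_i]$.
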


\begin{theorem} (Soundness)

Let $\mathcal{F}$ be a $\Sigma$-frame and $\vartheta$ a valuation, then $\mathcal{F} \models \operatorname{DSMALC}_{\Sigma}$.
\end{theorem}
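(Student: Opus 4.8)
The plan is to deduce soundness from the two facts already in place: that the complex algebra $\operatorname{Cm}(\mathcal{F})$ of a $\Sigma$-frame is a $\Sigma$-algebra (Proposition~\ref{complex}), and that in any model the truth-set map $[\![\cdot]\!]$ is computed compositionally inside $\operatorname{Cm}(\mathcal{F})$ (Proposition~\ref{val}). Fix a $\Sigma$-frame $\mathcal{F}$ and an arbitrary valuation $\vartheta \colon \operatorname{PV} \to \operatorname{Up}(W,\leq)$, and observe that $\vartheta$ is precisely an algebraic interpretation of the propositional variables into $\operatorname{Cm}(\mathcal{F})$. By Proposition~\ref{val}, the truth-set map of the model $\mathcal{M} = (\mathcal{F},\vartheta)$ coincides with the unique homomorphic extension of $\vartheta$ along the operations $\cap,\cup,\cdot,\setminus,/,([R_i])_{i\in\Sigma},\mathcal{O},\emptyset,W$; in particular every $[\![\varphi]\!]$ is an element of $\operatorname{Cm}(\mathcal{F})$, and clause (11) of that proposition gives $\mathcal{M}\models\varphi\vdash\psi$ iff $[\![\varphi]\!]\subseteq[\![\psi]\!]$ holds in $\operatorname{Cm}(\mathcal{F})$.

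Now since $\operatorname{DSMALC}_{\Sigma}$ is complete with respect to the variety of $\Sigma$-algebras — equivalently, every $\Sigma$-algebra validates, under every interpretation, every theorem of $\operatorname{DSMALC}_{\Sigma}$ — and $\operatorname{Cm}(\mathcal{F})$ is a $\Sigma$-algebra by Proposition~\ref{complex}, it follows that for each sequent $\varphi\vdash\psi$ provable in $\operatorname{DSMALC}_{\Sigma}$ we have $[\![\varphi]\!]\subseteq[\![\psi]\!]$ under the interpretation induced by $\vartheta$. By Proposition~\ref{val}(11) this is exactly $\mathcal{M}\models\varphi\vdash\psi$, and as $\vartheta$ was arbitrary, $\mathcal{F}\models\varphi\vdash\psi$. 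Hence $\mathcal{F}\models\operatorname{DSMALC}_{\Sigma}$.

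I would also note the alternative direct route, namely induction on the length of a derivation in $\operatorname{DSMALC}_{\Sigma}$: each axiom (associativity and unit laws for $\bullet$, the lattice and distributivity axioms, the ${\bf S4}$-laws $!_i p \vdash p$ and $!_i p \vdash\, !_i!_i p$, the promotion axiom $!_i p \bullet\, !_j q \vdash\, !_k(p\bullet q)$ for $k\succeq i,j$, ${\bf 1}\vdash\, !_i{\bf 1}$, the $W$/$C$/$E$-specific axioms, and $!_i p \land\, !_i q \vdash\, !_i(p\land q)$, $\top\vdash\, !_i\top$) is validated in $\mathcal{M}$ by unfolding the truth definition and invoking the matching clause of the $\Sigma$-frame definition, while each rule (the residuation adjunctions for $\setminus$ and $/$, the $\land$- and $\lor$-rules, monotonicity of $\bullet$, substitution, cut, and the modal rule ``from $\varphi\vdash\psi$ infer $!_i\varphi\vdash\, !_j\psi$ for $j\preceq i$'', the last using $R_j\subseteq R_i$) preserves validity. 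This amounts to re-proving Propositions~\ref{complex} and~\ref{val} inline and I would present it only as a remark.

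The only point demanding genuine care is the content of Proposition~\ref{val}: one must check that each semantic operation stays within $\operatorname{Up}(W,\leq)$, that the clause for $!_i\varphi$ really is the operator $[R_i]$ of $\operatorname{Cm}(\mathcal{F})$ (so that $[R_i]$ preserves finite intersections and sends $W$ to $W$, whence the modal distributivity axioms $!_i p \land\, !_i q \vdash\, !_i(p\land q)$ and $\top\vdash\, !_i\top$ hold), and that the $E$/$C$/$W$ frame conditions — stated as identities and inclusions among members of $\operatorname{Up}(\mathcal{F})$ — are literally the algebraic axioms they mirror. Once this bookkeeping is recorded, which is exactly what Propositions~\ref{complex} and~\ref{val} do, soundness is immediate, so I anticipate no real obstacle beyond this routine verification.
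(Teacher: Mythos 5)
Your proof is correct, but it is organised differently from the paper's. The paper proves soundness by the direct route you relegate to a remark: it fixes a model, unfolds the truth definition for each subexponential axiom (promotion, $\mathbf{1}\vdash\,!_i\mathbf{1}$, and the $E$/$C$/$W$ axioms), and matches each against the corresponding clause of the $\Sigma$-frame definition, citing earlier literature for the non-modal axioms, the rules, and the ${\bf S4}$ laws. Your factorisation through $\operatorname{Cm}(\mathcal{F})$ --- truth sets form an interpretation into the complex algebra, the complex algebra is a $\Sigma$-algebra, and every $\Sigma$-algebra validates every theorem --- is a legitimate and arguably cleaner decomposition, and it is non-circular since neither Proposition~\ref{complex} nor Proposition~\ref{val} depends on the soundness theorem. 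Two caveats. First, what you invoke is algebraic \emph{soundness} (provable implies valid in every $\Sigma$-algebra), not completeness; the paper's phrase ``complete with respect to that class'' is being read charitably as covering both directions, and you should say ``sound'' to avoid the appearance of assuming the harder direction. Second, your route does not eliminate the semantic work so much as relocate it: the paper proves Proposition~\ref{complex} only by the phrase ``follows from the definition of a $\Sigma$-frame,'' and verifying, say, that $[R_i]A\cdot[R_j]B\subseteq[R_k](A\cdot B)$ for $k\preceq i,j$ is exactly the computation the paper performs for the promotion axiom in its soundness proof (the $E$/$C$/$W$ clauses, by contrast, are already stated algebraically in the frame definition, so for those there is genuinely nothing to check). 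So the same mathematics appears in both proofs; yours buys modularity and reusability of the complex-algebra construction, the paper's keeps the verification visible at the level of formulas. Your final paragraph shows you are aware of precisely this, so there is no gap.
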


\begin{proof}
We check only subexponential axioms, for the rest of the axioms and inference rules, the proof is similar to \cite[Theorem 2]{seki2003sahlqvist}.

Let $w \in \mathcal{F}$ and $i, j, k \in \Sigma$.
\begin{enumerate}

\item We show that $\mathcal{M}, w \models \: !_{i} p \: \bullet \: !_{j} q \vdash \: !_{k} (p \bullet q) \in L$.

Assume that $\mathcal{M}, w \models \: !_{i} p \: \bullet \: !_{j} q$. Then there are $u, v \in \mathcal{F}$ such that $R u v w$, $\mathcal{M}, u \models \: !_{i} p$ and $\mathcal{M}, v \models \: !_{j} q$. Take any $w' \in R_{k}(w)$. Then $R u v w$ and $w R_{k} w'$ imply that there are $x, y \in \mathcal{F}$ such that $R x y w'$, $u R_{i} x$ and $v R_{j} y$, so $\mathcal{M}, x \models p$ and $\mathcal{M}, y \models q$, and, thus, $\mathcal{M}, w' \models p \bullet q$ and $\mathcal{M}, w \models \: !_{k} (p \bullet q)$.

\item $\mathcal{M}, w \models {\bf 1} \vdash \: !_i {\bf 1}$ follows from the condition $[R_i]\mathcal{O} = \mathcal{O}$.
\item Assume that $\mathcal{M}, w \models \: !_{i} p \: \bullet \: q$ and $i \in E$, then, by Proposition~\ref{val}, $w \in [R_i][\![p]\!] \cdot [\![q]\!]$. But $i \in E$, then $w \in [\![q]\!] \cdot [R_i][\![p]\!]$, then  $\mathcal{M}, w \models \: q \: \bullet \: !_{i} p$.
\item Assume that $\mathcal{M}, w \models \: !_{i} p \: \bullet \: q$ and $i \in C$, then, by Proposition~\ref{val}, $w \in [R_i][\![p]\!] \cdot [\![q]\!]$, so, as far as $i \in C$, $w \in [R_i][\![p]\!] \cdot [\![q]\!] \cdot [R_i][\![p]\!]$, then $\mathcal{M}, w \models \: !_{i} p \: \bullet \: q \: \bullet \: !_{i} p$. Another non-local contraction case can be proved similarly.
\item $\mathcal{M}, w \models \: !_i p \: \vdash \: {\bf 1}$ for $i \in W$ follows from the condition $[R_i][\![p]\!] \subseteq \mathcal{O}$.
\end{enumerate}
Notice that validity of axioms $!_i p \: \vdash \: p$ and $!_i p \: \vdash \: !_i !_i p$ follows from the condition that each $R_i$ is reflexive and transitive,
the proof is completely identical to \cite[Proposition 3.30 and Proposition 3.31]{DBLP:books/daglib/0030819}.
\end{proof}

Also, we connect validity in $\Sigma$-frames and their complex algebras as follows:
\begin{proposition} \label{complex}
Let $\mathcal{F}$ be a $\Sigma$-frame, then $\mathcal{F} \models \varphi \vdash \psi$ iff $\operatorname{Cm} \models [\![\varphi]\!] \leq [\![\psi]\!]$.
\end{proposition}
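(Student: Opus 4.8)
The plan is to reduce the statement to Proposition~\ref{val} together with the observation that Kripke valuations on $\mathcal{F}$ and algebraic interpretations into $\operatorname{Cm}(\mathcal{F})$ are literally the same objects: both are functions $\operatorname{PV} \to \operatorname{Up}(W)$. So I would fix such a function $\vartheta$ and regard it at once as a valuation, yielding the model $\mathcal{M} = (\mathcal{F}, \vartheta)$, and as an algebraic interpretation $[\![\cdot]\!] : \operatorname{PV} \to \operatorname{Cm}(\mathcal{F})$; since $\operatorname{Cm}(\mathcal{F})$ is a $\Sigma$-algebra, the latter extends uniquely to all formulas by reading off the operations of the complex algebra.

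First I would prove, by induction on the structure of $\varphi$, that the truth set $\{ w \in W \mid \mathcal{M}, w \models \varphi \}$ coincides with the value of the term $\varphi$ in $\operatorname{Cm}(\mathcal{F})$ under $\vartheta$. The base cases $p$, $\bot$, $\top$, and $\mathbf{1}$ are exactly clauses (1)--(4) of Proposition~\ref{val}. For the inductive step, clauses (5)--(10) of Proposition~\ref{val} say precisely that the truth set of a compound formula is obtained by applying the corresponding operation of $\operatorname{Cm}(\mathcal{F})$ --- one of $\cap$, $\cup$, $\cdot$, $\setminus$, $/$, or $[R_i]$ --- to the truth sets of its immediate subformulas; by the induction hypothesis those truth sets are the values of the subterms, and since these operations are the designated operations of the complex algebra, the value of $\varphi$ comes out as claimed. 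In particular the modal step works because $!_i$ is interpreted in $\operatorname{Cm}(\mathcal{F})$ by the operator $[R_i]$, which is exactly the content of clause (10).

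Given this, the equivalence follows by unwinding the two sides. On the one hand, $\mathcal{F} \models \varphi \vdash \psi$ means $(\mathcal{F}, \vartheta) \models \varphi \vdash \psi$ for every $\vartheta$, which by Proposition~\ref{val}(11) amounts to $[\![\varphi]\!] \subseteq [\![\psi]\!]$ for every $\vartheta$. On the other hand, $\operatorname{Cm}(\mathcal{F}) \models [\![\varphi]\!] \leq [\![\psi]\!]$ means that under every algebraic interpretation the value of $\varphi$ lies below the value of $\psi$ in the lattice order of $\operatorname{Cm}(\mathcal{F})$, and that order is inclusion of upward closed sets. By the induction above, both sides range over the same inclusion, indexed by the same set of maps $\vartheta$, so they are equivalent.

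I do not anticipate a genuine obstacle: the substantive work has already been done in Proposition~\ref{val}, and what remains is the routine identification of valuations with interpretations and of $\subseteq$ with the lattice order on $\operatorname{Up}(W)$. The only place deserving slight care is the modal case of the induction, where one checks that $[\![!_i\varphi]\!] = [R_i][\![\varphi]\!]$ really names the designated operation of $\operatorname{Cm}(\mathcal{F})$; but this is immediate from the definition of the complex algebra of a $\Sigma$-frame.
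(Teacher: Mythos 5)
Your proposal is correct and is exactly the argument the paper intends: the paper's proof is the one-line ``Follows from Proposition~\ref{val}'', and you have simply spelled out the routine identification of valuations on $\mathcal{F}$ with interpretations into $\operatorname{Cm}(\mathcal{F})$ and the unwinding of both sides via clause (11). No substantive difference from the paper's approach.
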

\begin{proof}
Follows from Proposition~\ref{val}.
\end{proof}

\section{Canonical extensions}

In this section, we introduce canonical extensions of $\Sigma$-algebras for an arbitrary subexponential signature $\Sigma$ and then use them to show Kripke completeness of $\operatorname{DSMALC}_{\Sigma}$ with respect to its frames.
Alternatively, we can show that $\operatorname{DSMALC}_{\Sigma}$ is complete with respect to Kripke semantics
by constructing the canonical model of prime theories, but we preferred this more algebraic approach to show both Kripke completeness and canonicity of the corresponding variety of algebras.

Our approach is based on canonical extensions of distributive bounded lattice expansions in the fashion of \cite{gehrke2004bounded} and \cite{gehrke2005sahlqvist}. We recall some underlying notions first of all.

Let $\mathcal{L}$ be a (complete) lattice and $a \in \mathcal{L}$, then $a$ is \emph{completely join-irreducible} if $a = \bigvee \limits_{i \in I} a_i$ implies that there is $j \in I$ such that $a = a_j$. Completely meet-irreducibeles are defined dually. $\mathcal{J}^{\infty}(\mathcal{L})$ ($\mathcal{M}^{\infty}(\mathcal{L})$) is the set of all completely join-irreducible (meet-irreducible) elements.

Note that $\mathcal{J}^{\infty}(\mathcal{L})$ and $\mathcal{M}^{\infty}(\mathcal{L})$ are order-isomorphic, the isomorphism $\kappa : \mathcal{J}^{\infty}(\mathcal{L}) \to \mathcal{M}^{\infty}(\mathcal{L})$ is defined as $\kappa : j \mapsto \bigvee (- \uparrow j)$, see, e.g., \cite[Theorem 2.3]{gehrke2004bounded}.

Let us define canonical extensions for bounded distributive lattices first and then extend it for required expansions. Recall that a complete distributive lattice $\mathcal{L}$ is called \emph{perfect} if:
\begin{enumerate}
\item $\mathcal{L}$ is completely distributive, that is, for every doubly indexed family $(a_{i,j})_{i \in I, j \in J}$ of $\mathcal{L}$, one has:
\begin{center}
$\bigwedge \limits_{i \in I} \bigvee \limits_{j \in J} a_{i,j} = \bigvee \limits_{f : J \to I} \bigwedge \limits_{i \in I} a_{i, f(i)}$
\end{center}
\item and every $a \in \mathcal{L}$ can be expressed as
\begin{center}
$a = \bigvee \{ j \in \mathcal{J}^{\infty}(\mathcal{L}) \: | \: j \leq x \} = \bigwedge \{ m \in \mathcal{M}^{\infty}(\mathcal{L}) \: | \: x \leq m \}$.
\end{center}
\end{enumerate}
 A \emph{canonical extension} of a bounded distributive lattice $\mathcal{L}$ is a perfect lattice $\mathcal{L}^{\sigma}$ such that $\mathcal{L} \hookrightarrow \mathcal{L}^{\sigma}$ and such that the compactness property holds:
\begin{itemize}
\item If $S, T \subseteq \mathcal{L}$ such that $\bigwedge S \leq \bigvee T$ in $\mathcal{L}^{\sigma}$, then there are finite subsets $S' \subseteq S$ and $T' \subseteq T$ such that $\bigwedge S' \leq \bigvee T'$ in $\mathcal{L}$.
\end{itemize}

It is known that every bounded distributive lattice has a unique canonical extension, which is, in fact, based on the Stone-Priestley representation \cite{priestley1970representation}.

To define canonical extensions for $\Sigma$-algebras, we combine techiques for residuated lattices and distributive modal algebras developed in \cite{gehrke2021topological} and \cite{gehrke2005sahlqvist}. Let $\mathcal{R} = (R, \vee, \wedge, \cdot, \setminus, /, (!_i)_{i \in \Sigma}, 0, 1, \epsilon)$ be a $\Sigma$-algebra, then $\mathcal{A}$ is \emph{perfect} if the following holds:
\begin{enumerate}
\item The lattice reduct is perfect as a distributive lattice,
\item $!$ preserves all infima,
\item $\cdot$ is completely additive, that is, it preserve all suprema in both arguments,
\item The residuals $\setminus : \mathcal{R} \times \mathcal{R} \to \mathcal{R}$ and $/ : \mathcal{R} \times \mathcal{R} \to \mathcal{R}$ are complete in the following sense: $\setminus$ preserves all infima in the first argument and all suprema in the second one and $/$ preserves all suprema in the first agrument and all infima in the second one.
\end{enumerate}

Given a $\Sigma$-algebra $\mathcal{R} = (R, \wedge, \vee, \cdot, \setminus, /, (!_i)_{i \in \Sigma}, 0, 1, \epsilon)$, the canonical extension of its lattice reduct $\mathcal{R}^{\sigma}$ can be extended to the canonical extension of a $\Sigma$-algebra. First of all, we define the sets of \emph{filter} and \emph{ideal} elements of the canonical extension $\mathcal{R}^{\sigma}$ as
\begin{center}
$\mathcal{F}(\mathcal{R}^{\sigma}) = \{ x \in \mathcal{R}^{\sigma} \: | \: \text{$x$ is a meet of elements from $\mathcal{R}$}\}$,

$\mathcal{I}(\mathcal{R}^{\sigma}) = \{ x \in \mathcal{R}^{\sigma} \: | \: \text{$x$ is a join of elements from $\mathcal{R}$} \}$.
\end{center}

First of all, we define residuals, modalities and the product operation for filter and ideal elements and then extend them for arbitrary elements of the canonical extension $\mathcal{R}^{\sigma}$ of the lattice reduct of $\mathcal{R}$. Given $x, x' \in \mathcal{F}(\mathcal{R}^{\sigma})$ and $y,y' \in \mathcal{I}(\mathcal{R}^{\sigma})$, then:

\begin{itemize}
\item $x \cdot^{\sigma} x' = \bigwedge \{ a \cdot a' \: | \: x \leq a \in \mathcal{R} \: \& \: x' \leq a' \in \mathcal{R} \}$,
\item $x \setminus^{\pi} y = \bigvee \{ a \setminus b \: | \: x \leq a \in \mathcal{R} \ni b \leq y \}$,
\item $!_i^{\sigma} x = \bigwedge \{ !_i a \: | \: x \leq a \in \mathcal{R} \}$, $i \in \Sigma$.
\end{itemize}

So we define the canonical extesion $\mathcal{R}^{\sigma}$ of $\mathcal{R}$ as the algebra $\mathcal{R}^{\sigma} = (R^{\sigma}, \cdot^{\sigma}, \setminus^{\pi}, /^{\pi},  (!_i^{\sigma})_{i \in \Sigma}, \epsilon)$ such that for all $a, b \in \mathcal{R}^{\sigma}$:
\begin{itemize}
\item $a \cdot^{\sigma} b = \bigvee \{ a' \cdot^{\sigma} b' \: | \: \mathcal{F}(\mathcal{R}^{\sigma}) \ni a' \leq a \: \& \: \mathcal{F}(\mathcal{R}^{\sigma}) \ni b' \leq b \}$,
\item $a \setminus^{\pi} b = \bigwedge \{ a' \setminus^{\pi} b' \: | \: a \geq a' \in \mathcal{I}(\mathcal{R}^{\sigma}) \: \& \: b \geq b' \in \mathcal{F}(\mathcal{R}^{\sigma})\}$,
\item $!_i^{\sigma} a = \bigvee \{ !^{\sigma} a' \: | \: \mathcal{F}(\mathcal{R}^{\sigma}) \ni a' \leq a \}$, $i \in \Sigma$.
\end{itemize}

The definition of $/^{\pi}$ is right-to-left symmetric to $\setminus^{\pi}$.

\begin{theorem} \label{canonical}
The variety of all $\Sigma$-algebras is canonical, that is, it is closed under canonical extensions.
\end{theorem}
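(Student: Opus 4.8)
The plan is to verify directly that the algebra $\mathcal{R}^{\sigma}$ defined above is again a $\Sigma$-algebra; since $\mathcal{R}$ embeds into $\mathcal{R}^{\sigma}$ and the class of $\Sigma$-algebras is a variety, this is exactly closure under canonical extensions. Throughout I would use two structural features of $\mathcal{R}^{\sigma}$: the compactness property of the canonical extension (if $\bigwedge S \leq \bigvee T$ with $S, T \subseteq \mathcal{R}$, then $\bigwedge S' \leq \bigvee T'$ for some finite $S' \subseteq S$ and $T' \subseteq T$), and the complete distributivity of the perfect lattice $\mathcal{R}^{\sigma}$ together with the join-density of the filter elements $\mathcal{F}(\mathcal{R}^{\sigma})$ (and, dually, the meet-density of ideal elements). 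That $(\mathcal{R}^{\sigma}, \cdot^{\sigma}, \setminus^{\pi}, /^{\pi}, \epsilon)$ is a perfect distributive residuated lattice --- i.e.\ $\cdot^{\sigma}$ is an associative, completely additive operation with unit $\epsilon$, residuated by $\setminus^{\pi}$ and $/^{\pi}$, and the lattice reduct is perfect --- is the canonicity of distributive residuated lattices, and I would simply invoke it, citing \cite{gehrke2005sahlqvist, gehrke2021topological}. Hence all the actual work concerns the clauses involving the $!_i$.

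First I would fix the basic behaviour of each $!_i^{\sigma}$. Monotonicity and the fact that $!_i^{\sigma}$ restricts to $!_i$ on $\mathcal{R}$ are immediate from the definition. If $x = \bigwedge\{a \in \mathcal{R} \mid x \leq a\}$ is a filter element, then $!_i^{\sigma} x = \bigwedge\{!_i a \mid x \leq a \in \mathcal{R}\}$ is again a filter element, and by compactness every $c \in \mathcal{R}$ with $!_i^{\sigma} x \leq c$ lies above some $!_i a$ with $x \leq a$; using this, finite-meet preservation of $!_i$ on $\mathcal{R}$, and complete distributivity to reduce arbitrary families to families of filter elements, one checks that $!_i^{\sigma}$ preserves all infima (in particular $!_i^{\sigma}\top = \top$), which yields the relevant part of clause~(2). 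The inequality $\epsilon \leq !_i^{\sigma}\epsilon$ is inherited from $\epsilon \leq !_i\epsilon$ in $\mathcal{R} \subseteq \mathcal{R}^{\sigma}$. For contractivity and idempotence I would again argue on filter elements: $!_i^{\sigma} x = \bigwedge\{!_i a\} \leq \bigwedge\{a\} = x$ and $!_i^{\sigma}!_i^{\sigma} x = \bigwedge\{!_i!_i a\} = \bigwedge\{!_i a\} = !_i^{\sigma} x$ (using $!_i a \leq a$ and $!_i!_i a = !_i a$ in $\mathcal{R}$, together with the cofinality just noted), and then lift to arbitrary elements by join-density of filter elements and monotonicity.

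The core of the proof is the transfer of the intensional clauses~(3)--(6), and the uniform recipe is: prove the inequality on filter elements, then lift. For clause~(3), let $k \preceq i, j$ and $x, y \in \mathcal{F}(\mathcal{R}^{\sigma})$. From complete additivity of $\cdot^{\sigma}$ and the cofinality remark one gets $!_i^{\sigma} x \cdot^{\sigma} !_j^{\sigma} y = \bigwedge\{!_i a \cdot !_j b \mid x \leq a \in \mathcal{R},\ y \leq b \in \mathcal{R}\}$, while $!_k^{\sigma}(x \cdot^{\sigma} y) = \bigwedge\{!_k c \mid c \in \mathcal{R},\ x \cdot^{\sigma} y \leq c\}$. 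Fixing such a $c$, from $x \cdot^{\sigma} y = \bigwedge\{a \cdot b \mid x \leq a,\ y \leq b\} \leq c$ compactness supplies finitely many $(a_t, b_t)$ with $x \leq a_t$, $y \leq b_t$ and $\bigwedge_t(a_t \cdot b_t) \leq c$; setting $a = \bigwedge_t a_t$ and $b = \bigwedge_t b_t$ gives $x \leq a$, $y \leq b$ and $a \cdot b \leq c$, hence $!_i a \cdot !_j b \leq !_k(a \cdot b) \leq !_k c$ by clause~(3) in $\mathcal{R}$ and monotonicity, so $!_i^{\sigma} x \cdot^{\sigma} !_j^{\sigma} y \leq !_i a \cdot !_j b \leq !_k c$; taking the meet over $c$ proves the filter case. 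To lift to arbitrary $a, b \in \mathcal{R}^{\sigma}$, write $!_i^{\sigma} a = \bigvee\{!_i^{\sigma} x \mid \mathcal{F}(\mathcal{R}^{\sigma}) \ni x \leq a\}$ and similarly for $b$, use that $\cdot^{\sigma}$ preserves these joins, and observe each summand $!_i^{\sigma} x \cdot^{\sigma} !_j^{\sigma} y \leq !_k^{\sigma}(x \cdot^{\sigma} y) \leq !_k^{\sigma}(a \cdot^{\sigma} b)$. Clause~(4) (non-local contraction, $i \in C$) and clause~(6) (exchange, $i \in E$) follow by the same pattern, now applying the identities $!_i a \cdot b \leq !_i a \cdot b \cdot !_i a$ and $!_i a \cdot b = b \cdot !_i a$ of $\mathcal{R}$ to the witnesses and lifting via complete additivity of $\cdot^{\sigma}$. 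Clause~(5) ($i \in W$) is direct: for a filter element $x$ we have $!_i^{\sigma} x \leq !_i \top \leq \epsilon$, hence $!_i^{\sigma} a = \bigvee\{!_i^{\sigma} x \mid x \leq a\} \leq \epsilon$ for every $a$.

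I expect the third paragraph --- specifically clauses~(3) and~(4) --- to be the main obstacle: these inequalities interleave the operator $\cdot^{\sigma}$ with the meet-preserving maps $!_i^{\sigma}$, so there is no short-cut through a generic ``positive inequalities are canonical'' slogan. The two load-bearing lemmas are the identification, on filter elements, of $!_i^{\sigma} x \cdot^{\sigma} !_j^{\sigma} y$ with a single down-directed meet of honest terms of $\mathcal{R}$ (which needs complete additivity of $\cdot^{\sigma}$ together with finite-meet preservation of $!_i$), and the compactness step that replaces an arbitrary $\mathcal{R}$-upper bound of a product of filter elements by a finite meet of genuine products, so that each clause is reduced to its already-available instance in $\mathcal{R}$. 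Once $\mathcal{R}^{\sigma}$ is seen to be a $\Sigma$-algebra, canonicity of the variety --- and, via the complex-algebra correspondence of Proposition~\ref{complex}, Kripke completeness of $\operatorname{DSMALC}_{\Sigma}$ --- follows.
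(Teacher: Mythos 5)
Your proposal is correct and follows essentially the same strategy as the paper: delegate the residuated-lattice and meet-preservation facts to the known canonicity results, then verify each subexponential inequality first on filter elements (where both sides reduce to down-directed meets of terms of $\mathcal{R}$, so the axiom of $\mathcal{R}$ applies termwise) and lift to arbitrary elements by join-density of filter elements and complete additivity of $\cdot^{\sigma}$. The only difference is cosmetic: where the paper closes the filter-element case of clause~(3) by invoking that $!_k^{\sigma}$ commutes with all infima, you re-derive the needed instance by compactness and the down-directedness coming from finite-meet preservation of $!_i$, which is a slightly more self-contained justification of the same step.
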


\begin{proof}
The fact that the canonical extension of the lattice reduct is a perfect distributive lattice is by Jonsson, see \cite[Theorem 2.3]{gehrke1994bounded}. The canonical extension of the residuated lattice reduct of $\mathcal{R}$ is also a perfect residuated lattice, see \cite[Proposition 5]{gehrke2021topological}. The canonical extension of the $!$ modal operator preserves arbitrary infima as it was shown in \cite[Lemma 2.21]{gehrke2005sahlqvist}. The items we have got to check are the following, for $i, j, k \in \Sigma$ and for $a, b \in \mathcal{R}^{\sigma}$:
\begin{enumerate}
\item $!^{\sigma}_i a \: \cdot^{\sigma} \: !^{\sigma}_j b \leq \: !_k^{\sigma}(a \cdot^{\sigma} b)$ whenever $k \preceq i, j$,
\item $\epsilon \leq !^{\sigma}_i \epsilon$
\item $!^{\sigma}_i a \: \cdot^{\sigma} b = b \: \cdot^{\sigma}  \: !^{\sigma}_i a$ whenever $i \in E$,
\item $!^{\sigma}_i a \: \cdot^{\sigma} b \leq \: !^{\sigma}_i a \cdot^{\sigma} b \: \cdot^{\sigma} \: !^{\sigma}_i a$ and $b \: \cdot^{\sigma} \: !^{\sigma}_i a \leq \: !^{\sigma}_i a \cdot^{\sigma} b \: \cdot^{\sigma} \: !^{\sigma}_i a$ whenever $i \in C$,
\item $!^{\sigma}_i a \: \leq \epsilon$ whenever $i \in W$.
\end{enumerate}
We check only the first and fourth items, the rest of them are shown similarly.
\begin{enumerate}
\item Fix $i,j, k \in \Sigma$ such that $k \preceq i, j$. First of all, take any $x, y \in \mathcal{F}(\mathcal{R}^{\sigma})$. Observe that
\begin{center}
$!^{\sigma}_i x \: \cdot^{\sigma} \: !^{\sigma}_j y = \bigwedge \{ !_i \: x' \cdot \: !_j y' \: | \: x \leq x' \in \mathcal{R} \: \& \: y \leq y' \in \mathcal{R}\}$
\end{center}
which follows from that fact $\cdot^{\sigma}$ is order-preserving and both $!^{\sigma}_i$ and $!^{\sigma}_j$ preserve arbitrary infima and from the definition of a filter element. Then we have:

$\begin{array}{lll}
!^{\sigma}_i x \: \cdot^{\sigma} \: !^{\sigma}_j y = && \\
\:\:\:\:\:\:\:\:\:\:\:\: \text{By the definition of $\cdot^{\sigma}$} \\
\:\:\:\:\:\:\:\: \bigwedge \{ !_i \: x' \cdot \: !_j y' \: | \: x \leq x' \in \mathcal{R} \: \& \: y \leq y' \in \mathcal{R}\} \leq && \\
\:\:\:\:\:\:\:\:\:\:\:\:  \text{By the axiom of $\Sigma$-algebras} \\
\:\:\:\:\:\:\:\: \bigwedge \{ !_k (x' \cdot \: y') \: | \: x \leq x' \in \mathcal{R} \: \& \: y \leq y' \in \mathcal{R}\} = && \\
\:\:\:\:\:\:\:\:\:\:\:\:  \text{By the definition of $!_k^{\sigma}$} \\
\:\:\:\:\:\:\:\: \bigwedge !_k^{\sigma} \{ x' \cdot \: y' \: | \: x \leq x' \in \mathcal{R} \: \& \: y \leq y' \in \mathcal{R}\} = && \\
\:\:\:\:\:\:\:\:\:\:\:\: \text{$!_k^{\sigma}$ commutes with all infima} \\
\:\:\:\:\:\:\:\: !_k^{\sigma} \bigwedge \{ x' \cdot \: y' \: | \: x \leq x' \in \mathcal{R} \: \& \: y \leq y' \in \mathcal{R}\} = && \\
\:\:\:\:\:\:\:\:\:\:\:\: \text{By the definition of $\cdot^{\sigma}$} \\
\:\:\:\:\:\:\:\: !_k^{\sigma} (x \: \cdot^{\sigma} y)
\end{array}$

Now we show it for arbitrary elements of $\mathcal{R}^{\sigma}$. So take any $a, b \in \mathcal{R}^{\sigma}$, then, using monotonicity of $!_k^{\sigma}$ and the observation above, we have:

$\begin{array}{lll}
!^{\sigma}_i a \: \cdot^{\sigma} \: !^{\sigma}_j b = && \\
\:\:\:\:\:\:\:\:\:\:\:\: \text{By the definition of $!^{\sigma}_i$, $!^{\sigma}_j b$ and $\cdot^{\sigma}$ } && \\
\:\:\:\:\:\:\:\: \bigvee \{ !^{\sigma}_i x \: \cdot^{\sigma} \: !^{\sigma}_j y  \: | \: a \geq x \in
 \mathcal{F}(\mathcal{R}^{\sigma}) \: \& \: b \geq y \in \mathcal{F}(\mathcal{R}^{\sigma}) \} \leq && \\
\:\:\:\:\:\:\:\:\:\:\:\: \text{The observation above} && \\
\:\:\:\:\:\:\:\: \bigvee \{ !^{\sigma}_k (x \cdot^{\sigma} y)  \: | \: a \geq x \in \mathcal{F}(\mathcal{R}^{\sigma}) \: \& \: b \geq y \in \mathcal{F}(\mathcal{R}^{\sigma}) \} \leq && \\
\:\:\:\:\:\:\:\:\:\:\:\: \text{$!^{\sigma}_k$ is order-preserving} &&\\
\:\:\:\:\:\:\:\: !^{\sigma}_k \bigvee \{ x \cdot^{\sigma} y  \: | \: a \geq x \in \mathcal{F}(\mathcal{R}^{\sigma}) \: \& \: b \geq y \in \mathcal{F}(\mathcal{R}^{\sigma}) \} = && \\
\:\:\:\:\:\:\:\:\:\:\:\: \text{The definition of $\cdot^{\sigma}$} && \\
\:\:\:\:\:\:\:\: !^{\sigma}_k(a \cdot^{\sigma} b)
\end{array}$
\item Now let us show that $!^{\sigma}_i a \: \cdot^{\sigma} b \leq \: !^{\sigma}_i a \cdot^{\sigma} b \: \cdot^{\sigma} \: !^{\sigma}_i a$ for $i \in C$ and $a, b \in \mathcal{R}^{\sigma}$. Let us show this inequality for filter elements first. Let $x, y \in \mathcal{F}(\mathcal{R}^{\sigma})$, then

$\begin{array}{lll}
!^{\sigma}_i x \: \cdot^{\sigma} y = && \\
\:\:\:\:\:\:\:\:\:\:\:\: \text{The definition of filter elements and $!^{\sigma}_i$} && \\
\:\:\:\:\:\:\:\: \bigwedge \{ !_i x' \: | \: x \leq x' \in \mathcal{R} \} \: \cdot^{\sigma} \bigwedge \{ y' \in \mathcal{R} \: | \: y \leq y' \} = && \\
\:\:\:\:\:\:\:\:\:\:\:\: \text{The definition of $\cdot^{\sigma}$} && \\
\:\:\:\:\:\:\:\: \bigwedge \{ !_i x' \: \cdot \: y' \: | \: x \leq x' \in \mathcal{R}, y \leq y' \in \mathcal{R} \} \leq && \\
\:\:\:\:\:\:\:\:\:\:\:\: \text{The axioms of $\Sigma$-algebras} && \\
\:\:\:\:\:\:\:\: \bigwedge \{ !_i x' \: \cdot \: y' \: \cdot \: !_i x' \: | \: x \leq x' \in \mathcal{R}, y \leq y' \in \mathcal{R} \} = && \\
\:\:\:\:\:\:\:\:\:\:\:\: \text{The definition of $\cdot^{\sigma}$} && \\
\:\:\:\:\:\:\:\: \bigwedge \{ !_i x' \: | \: x \leq x' \in \mathcal{R} \} \: \cdot^{\sigma} \bigwedge \{ y' \in \mathcal{R} \: | \: y \leq y' \} \cdot^{\sigma} \bigwedge \{ !_i x' \: | \: x \leq x' \in \mathcal{R} \} = && \\
\:\:\:\:\:\:\:\:\:\:\:\: \text{$!^{\sigma}_i$ preserves all infima} && \\
\:\:\:\:\:\:\:\: !^{\sigma}_i \bigwedge \{ x \: | \: x \leq x' \in \mathcal{R} \} \: \cdot^{\sigma} \bigwedge \{ y \in \mathcal{R} \: | \: y \leq y' \} \: \cdot^{\sigma} !^{\sigma}_i \bigwedge \{ x \: | \: x \leq x' \in \mathcal{R} \} = && \\
\:\:\:\:\:\:\:\:\:\:\:\: \text{The definition of $\cdot^{\sigma}$} && \\
\:\:\:\:\:\:\:\: !^{\sigma}_i x \: \cdot^{\sigma} y \: \cdot^{\sigma} \: !^{\sigma}_i x
\end{array}$

So, for arbitrary $a, b \in \mathcal{R}^{\sigma}$ we have:

$\begin{array}{lll}
{!}^{\sigma}_i a \: \cdot^{\sigma} b = &&\\
\:\:\:\:\:\:\:\:\:\:\:\: \text{The definition of $\cdot^{\sigma}$} && \\
\:\:\:\:\:\:\:\: \bigvee \{ !^{\sigma}_i x \: \cdot^{\sigma} y \: | \: a \geq x \in \mathcal{F}(\mathcal{R}^{\sigma}), y \geq \in \mathcal{F}(\mathcal{R}^{\sigma}) \} \leq && \\
\:\:\:\:\:\:\:\:\:\:\:\: \text{The observation above} && \\
\:\:\:\:\:\:\:\: \bigvee \{ !^{\sigma}_i x \: \cdot^{\sigma} y \: \cdot^{\sigma} !^{\sigma}_i x \: | \: a \geq x \in \mathcal{F}(\mathcal{R}^{\sigma}), y \geq \in \mathcal{F}(\mathcal{R}^{\sigma}) \} = && \\
\:\:\:\:\:\:\:\:\:\:\:\: \text{The definition of $\cdot^{\sigma}$ and $!^{\sigma}_i$} && \\
\:\:\:\:\:\:\:\: !^{\sigma}_i a \cdot^{\sigma} b \: \cdot^{\sigma} \: !^{\sigma}_i a
\end{array}$
\end{enumerate}
\end{proof}
In the previous section, we defined complex algebras of $\Sigma$-frames, so let us define dual structures of perfect $\Sigma$-algebras. Let $\Sigma$ be a subexponential signature and let $\mathcal{A} = (A, \bigvee, \bigwedge, \cdot, \setminus, /, (!_i)_{i \in I}, 0, 1, \epsilon)$ be a $\Sigma$-algebra, its \emph{atom structure} (that is, the frame of completely join-irreducible elements) is the structure of the form $\operatorname{At}(\mathcal{A}) = (\mathcal{J}^{\infty}(\mathcal{A}), \leq_{\delta}, R, (R_i)_{i \in \Sigma}, \uparrow \epsilon)$, where $\leq_{\delta}$ is dual order on $\mathcal{J}^{\infty}(\mathcal{R})$ and:
\begin{itemize}
\item $R$ is a ternary relation on $\mathcal{J}^{\infty}(\mathcal{A})$ such that $Rabc$ iff $a \cdot b \leq_{\delta} c$,
\item $R_i$ is a binary relation on $\mathcal{J}^{\infty}(\mathcal{A})$ such that $a R_i b$ iff $!_i \kappa(a) \leq_{\delta} \kappa(b)$ for each $i \in \Sigma$.
\end{itemize}
The definition of perfect $\Sigma$-algebras guarantee that $\operatorname{At}(\mathcal{A})$ is a well-formed Kripke-frame. $\operatorname{At}$ and $\operatorname{Cm}$ are connected with each other as follows:
\begin{lemma} \label{duality}
Let $\mathcal{R}$ be a perfect $\Sigma$-algebra, then $\mathcal{R} \cong \operatorname{Cm}(\operatorname{At}(\mathcal{R}))$.
\end{lemma}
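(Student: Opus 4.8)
The plan is to establish the isomorphism $\mathcal{R} \cong \operatorname{Cm}(\operatorname{At}(\mathcal{R}))$ by exhibiting an explicit map and checking it is a bijective homomorphism of $\Sigma$-algebras. Since $\mathcal{R}$ is perfect, its lattice reduct is a perfect distributive lattice, so by the Priestley/J\'onsson representation the map $\eta : \mathcal{R} \to \operatorname{Up}(\mathcal{J}^{\infty}(\mathcal{R}), \leq_\delta)$ sending $a \mapsto \{ j \in \mathcal{J}^{\infty}(\mathcal{R}) \mid j \leq a \}$ is a lattice isomorphism, with inverse $S \mapsto \bigvee S$; this uses precisely the two clauses in the definition of a perfect lattice (complete join-generation by $\mathcal{J}^{\infty}$ and complete distributivity). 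I would first recall this fact and note that $\eta$ sends $\bot, \top, \epsilon$ to $\emptyset, \mathcal{J}^{\infty}(\mathcal{R}), \uparrow\!\epsilon$ respectively.

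Next I would verify that $\eta$ commutes with $\cdot$, $\setminus$, $/$ and each $!_i$, i.e. that $\eta(a \cdot b) = \eta(a) \cdot \eta(b)$, $\eta(a \setminus b) = \eta(a) \setminus \eta(b)$, $\eta(a/b) = \eta(a)/ \eta(b)$ and $\eta(!_i a) = [R_i]\eta(a)$, where the right-hand operations are the complex-algebra operations of $\operatorname{At}(\mathcal{R})$. For the product: $j \in \eta(a \cdot b)$ iff $j \leq a \cdot b$; using that $a = \bigvee\{x \in \mathcal{J}^\infty \mid x \leq a\}$, $b = \bigvee\{y \mid y \leq b\}$ and complete additivity of $\cdot$ (part of perfectness), $a \cdot b = \bigvee\{x \cdot y \mid x \leq a, y \leq b, x,y \in \mathcal{J}^\infty\}$, and since $j$ is completely join-irreducible, $j \leq a \cdot b$ iff $j \leq x \cdot y$ for some such $x,y$, i.e. iff $Rxyj$ with $x \in \eta(a)$, $y \in \eta(b)$; but that is exactly the definition of $\eta(a) \cdot \eta(b)$ in $\operatorname{Cm}(\operatorname{At}(\mathcal{R}))$. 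For the residuals one argues dually via $\kappa$ and the completeness conditions on $\setminus, /$ (preservation of the relevant infima/suprema), and I would likely reduce $/$ to $\setminus$ by the stated symmetry. For $!_i$: $j \in \eta(!_i a)$ iff $j \leq !_i a$; since $\kappa$ is the order-isomorphism $\mathcal{J}^\infty \to \mathcal{M}^\infty$ and $!_i$ preserves all infima, one rewrites $!_i a = !_i \bigwedge\{\kappa(k) \mid a \leq \kappa(k)\} = \bigwedge\{!_i\kappa(k) \mid \dots\}$ and unwinds $j \leq !_i\kappa(k)$, matching $a R_i k$, to get exactly the condition $\forall k\,(j R_i k \Rightarrow k \in \eta(a))$, i.e. $j \in [R_i]\eta(a)$.

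Finally I would note that $\operatorname{At}(\mathcal{R})$ really is a $\Sigma$-frame: reflexivity and transitivity of $R_i$ come from $!_i \kappa(a) \leq \kappa(a)$ and idempotence $!_i!_i = !_i$ together with the $\kappa$-correspondence, the ternary-frame conditions are the standard ones for residuated lattices, and the remaining frame clauses (the interaction clause for $k \preceq i,j$, monotonicity $R_j \subseteq R_i$ for $i \preceq j$, $\mathcal{O} \subseteq [R_i]\mathcal{O}$, and the $E$-, $C$-, $W$-clauses) follow from the corresponding $\Sigma$-algebra axioms transported along $\eta$ — in fact once $\eta$ is shown to be an isomorphism onto $\operatorname{Cm}(\operatorname{At}(\mathcal{R}))$, Proposition~\ref{complex} tells us $\operatorname{Cm}(\operatorname{At}(\mathcal{R}))$ is a $\Sigma$-algebra automatically, so this direction is mostly a sanity check. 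The main obstacle I anticipate is the bookkeeping around $\kappa$ in the residual and modal clauses: one must be careful that $\mathcal{J}^\infty$ is join-generating while $\mathcal{M}^\infty$ is meet-generating, so the "points" of the frame are join-irreducibles but the modal and implicational operations are naturally phrased via their meet-irreducible images $\kappa(j)$, and getting the direction of $\leq_\delta$ versus $\leq$ consistent throughout is where an error is most likely to creep in. Everything else is a routine unfolding of definitions given perfectness.
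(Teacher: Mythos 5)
Your proposal is correct and takes essentially the same route as the paper: the same Raney-style map $\eta(a)=\{\,j\in\mathcal{J}^{\infty}(\mathcal{R})\mid j\leq a\,\}$, verified operation by operation using complete additivity of $\cdot$ together with join-primeness of completely join-irreducibles in a completely distributive lattice for the product, the completeness conditions on the residuals, and meet-preservation of $!_i$ via $\kappa$ for the modal clause. The paper simply outsources the modal and residual cases to cited references, so your write-up supplies the details it omits; the only point to tighten is the one you already flag, namely that the orientation of $R_i$ (it should come out as $j\,R_i\,k$ iff $!_i\kappa(k)\leq\kappa(j)$, i.e.\ $j\not\leq\ !_i\kappa(k)$) must be the one that makes the box clause unwind correctly.
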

\begin{proof}
We give only a proof for products, the modal part is similar to \cite[Proposition 2.25]{gehrke2005sahlqvist}, the argument for products residuals is similar to \cite[Lemma 6.10]{galatos2003varieties}. The isomorphism itself is established with the map $\eta : a \mapsto \{ b \in \mathcal{J}^{\infty}(\mathcal{R}) \: | \: a \leq_{\delta} b \}$. That is, we extend Raney representation of perfect distibutive lattices \cite{raney1952completely}.
\end{proof}

Finally, using canonical extensions for $\Sigma$-algebras and duality between perfect $\Sigma$-algebras and $\Sigma$-frames.

\begin{theorem}
$\operatorname{DSMALC}_{\Sigma}$ is complete with respect to Kripke semantics.
\end{theorem}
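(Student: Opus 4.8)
The plan is to obtain Kripke completeness by chaining the algebraic completeness of $\operatorname{DSMALC}_{\Sigma}$, the canonicity from Theorem~\ref{canonical}, the duality of Lemma~\ref{duality}, and Proposition~\ref{complex}, in the familiar style of a canonicity-based completeness proof. Suppose $\varphi \vdash \psi$ is not provable in $\operatorname{DSMALC}_{\Sigma}$. By the algebraic completeness recorded above, $F_{\Sigma} \not\models [\![\varphi]\!] \leq [\![\psi]\!]$, so there is a $\Sigma$-algebra $\mathcal{R}$ (e.g.\ $\mathcal{R} = F_{\Sigma}$) together with an interpretation witnessing $[\![\varphi]\!] \not\leq [\![\psi]\!]$ in $\mathcal{R}$.

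Next I would replace $\mathcal{R}$ by its canonical extension $\mathcal{R}^{\sigma}$. By Theorem~\ref{canonical} this is again a $\Sigma$-algebra, and the canonical embedding $e : \mathcal{R} \hookrightarrow \mathcal{R}^{\sigma}$ is an embedding of $\Sigma$-algebras: it preserves the bounded-lattice structure by construction, and it commutes with $\cdot$, $\setminus$, $/$ and every $!_i$ because each $a \in \mathcal{R}$ is simultaneously a filter and an ideal element, so the two-step $\sigma$/$\pi$-extensions of these operations collapse to the original operations on elements of $\mathcal{R}$ (a routine verification from the defining formulas). Since an embedding reflects the order, the failure $[\![\varphi]\!] \not\leq [\![\psi]\!]$ persists in $\mathcal{R}^{\sigma}$ under the interpretation $e \circ [\![\cdot]\!]$. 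Moreover $\mathcal{R}^{\sigma}$ is a \emph{perfect} $\Sigma$-algebra: the lattice reduct is perfect by Jónsson, $\cdot^{\sigma}$ is completely additive and the residuals are complete as in~\cite{gehrke2021topological}, and each $!_i^{\sigma}$ preserves arbitrary infima by~\cite[Lemma~2.21]{gehrke2005sahlqvist}.

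Finally I would invoke the duality. Being a perfect $\Sigma$-algebra, $\mathcal{R}^{\sigma}$ has a well-formed atom structure $\mathcal{F} = \operatorname{At}(\mathcal{R}^{\sigma})$, which is a $\Sigma$-frame, and Lemma~\ref{duality} gives $\mathcal{R}^{\sigma} \cong \operatorname{Cm}(\mathcal{F})$. Pushing the refuting interpretation through this isomorphism yields $\operatorname{Cm}(\mathcal{F}) \not\models [\![\varphi]\!] \leq [\![\psi]\!]$, hence $\mathcal{F} \not\models \varphi \vdash \psi$ by Proposition~\ref{complex}. Combined with soundness ($\mathcal{F} \models \operatorname{DSMALC}_{\Sigma}$ for every $\Sigma$-frame), this establishes that $\varphi \vdash \psi$ is provable in $\operatorname{DSMALC}_{\Sigma}$ if and only if it holds in all $\Sigma$-frames, which is the claim.

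The conceptual work has already been done in Theorem~\ref{canonical} (closure of the $\Sigma$-equations under canonical extension) and Lemma~\ref{duality} (the representation $\mathcal{R}^{\sigma} \cong \operatorname{Cm}(\operatorname{At}(\mathcal{R}^{\sigma}))$), so the only point requiring care here is that the canonical embedding really is a $\Sigma$-homomorphism, i.e.\ that the filter/ideal-element stage and the arbitrary-element stage of the definitions of $\cdot^{\sigma}$, $\setminus^{\pi}$, $/^{\pi}$ and $!_i^{\sigma}$ agree with the original operations on $\mathcal{R}$. This is where the bookkeeping lives, but it is purely routine; I do not expect any real obstacle beyond it.
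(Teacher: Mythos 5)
Your proposal is correct and follows essentially the same route as the paper: algebraic completeness via the free algebra, canonicity (Theorem~\ref{canonical}), the duality $\mathcal{R}^{\sigma} \cong \operatorname{Cm}(\operatorname{At}(\mathcal{R}^{\sigma}))$ (Lemma~\ref{duality}), and Proposition~\ref{complex}. The only difference is that you make explicit the (routine) check that the canonical embedding is a $\Sigma$-homomorphism, which the paper compresses into the remark that $F_{\Sigma}^{\sigma}$ contains $F_{\Sigma}$ as a subalgebra.
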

\begin{proof}
Suppose that $\varphi \nvdash \psi$, then $F_{\Sigma} \not\models \varphi \leq \psi$ by algebraic completeness, then $F_{\Sigma}^{\sigma} \not\models \varphi \leq \psi$ since $F_{\Sigma}^{\sigma}$ contains $F_{\Sigma}$ as a subalgebra,  $F_{\Sigma}^{\sigma}$ is a perfect $\Sigma$-algebra as it was shown in Theorem~\ref{canonical}. By Lemma~\ref{duality}, $\operatorname{Cm}(\operatorname{At}(F_{\Sigma}^{\sigma})) \not\models \varphi \leq \psi$ and therefore $\operatorname{At}(F_{\Sigma}^{\sigma}) \not\models \varphi \vdash \psi$ by Proposition~\ref{complex}.
\end{proof}

\begin{remark}
$\operatorname{At}(F_{\Sigma}^{\sigma})$, the atom structure of the canonical extension of the Lindenbaum-Tarksi algebra in the variety of all $\Sigma$-alebras can be thought as the \emph{canonical frame} of $\operatorname{DSMALC}_{\Sigma}$.
\end{remark}

\bibliographystyle{eptcs}
\bibliography{generic}
\end{document}